 \newtheorem{lemma}{Lemma}
 \newtheorem{proposition}{Proposition}
 \newtheorem{definition}{Definition}
 \newtheorem{remark}{Remark}
 \newtheorem{example}{Example}
\renewcommand{\emph}{\textit}
\begin{document}
\title{Minimal realization of the dynamical structure function and its application to network reconstruction}
\author{Ye Yuan, Guy-Bart Stan, Sean Warnick and Jorge Gon\c{c}alves
  \thanks{Ye Yuan and Jorge Gon\c{c}alves are with
    Control Group, Department of Engineering, University of Cambridge (yy311@cam.ac.uk, jmg77@cam.ac.uk). Guy-Bart Stan is with the Department of Bioengineering and the Centre for Synthetic Biology and Innovation, Imperial College London (g.stan@imperial.ac.uk). Sean Warnick is with the Information and Decision Algorithms Laboratories, Department of Computer Science, Brigham Young University (sean.warnick@gmail.com). }    }
\maketitle \thispagestyle{empty}

\begin{abstract}
  Network reconstruction, i.e., obtaining network structure from data,
  is a central theme in systems biology, economics and
  engineering. In some previous work, we introduced dynamical structure functions
  as a tool for posing and solving the problem of network
  reconstruction between measured states. While recovering the network
  structure between hidden states is not possible since they are not
  measured, in many situations it is important to estimate the minimal number
  of hidden states in order to understand the complexity of the
  network under investigation and help identify potential targets for
  measurements. Estimating the minimal number of hidden states is also crucial
  to obtain the simplest state-space model that captures the network
  structure and is coherent with the measured data.  This paper
  characterizes minimal order state-space realizations that are
  consistent with a given dynamical structure function by exploring
  properties of dynamical structure functions and developing
  an algorithm to explicitly obtain such a minimal realization.
  \end{abstract}

\section{Introduction}
Networks have received an increasing amount of attention in the last
decade. In our ``information-rich'' world, the questions of network
reconstruction and network analysis become crucial for the
understanding of complex systems such as biological, social, or
economical networks. In particular, the analysis of molecular
networks has gained significant interest due to the recent explosion
of publicly available high-throughput biological data. In this context, the question of
identifying and analyzing the network structure at the
origin of measured data becomes a key issue.

In some occasions, measured data is given in the form of input-output
time-series that describes the effect of inputs on outputs (measured
states) of a network. When data is generated by a linear system, a
matrix transfer function describing the dynamic input-output behavior is generally obtained using system identification~\cite{Ljung}.
If the original state-space model is available or deducible, then the associated network structure can be readily obtained from it. However, a transfer function cannot, in general,
recover, or realize, the original state-space model since the
realization problem does not typically have a unique solution, i.e.,
different state-space realizations can generate the same input-output
behavior.
Since each of these realizations may suggest entirely different
network structures, it is in general impossible to identify network
structures from transfer functions alone. Therefore, more information,
beyond input-output data used to identify a transfer function, is
needed to prefer one state-space realization over another as a
description of a particular system\cite{csm}.

Another difficulty in the network reconstruction problem comes from
the fact that the realization problem becomes ill posed when some of the states
are unobservable or ``hidden'' (this happens with just
one hidden state \cite[pp. $78$]{zdg}). As a result, failure to
explicitly acknowledge the presence of hidden states and the resulting
ambiguity in network structures can lead to a deceptive and erroneous
process for network structure discovery. Consequently, determining
from measured data the presence or absence of a causal relationship
between two variables in a network is a challenging question. 

Motivated by this, we are focusing on the effect of hidden states in the network that we are aiming to reconstruct.  A new representation for LTI systems, called dynamical structure functions was introduced in \cite{08net_rec} and developed in \cite{russ1, russ2, book, enoch1, enoch2, enoch3} .
Dynamical structure functions capture information at an intermediate level between transfer function and state space
representation (see Figure~\ref{Fig:math}). Specifically, dynamical structure functions not only encode structural information at the measurement level, but also contain some
information about hidden states. Based on the theoretical results presented in \cite{08net_rec}, we proposed some guidelines for the design of an experimental data-acquisition protocol
which allows the collection of data containing sufficient information for the network structure reconstruction problem
to become solvable. In particular, we have shown that if nothing is known about the network, then reconstruction is impossible.  If, however, one can make the following reasonable assumptions about the data-collection experiments:
Networks have received an increasing amount of attention in the last
decade. In our ``information-rich'' world, the questions of network
reconstruction and network analysis become crucial for the
understanding of complex systems such as biological, social, or
economical networks. In particular, the analysis of molecular
networks has gained significant interest due to the recent explosion
of publicly available high-throughput biological data. In this context, the question of
identifying and analyzing the network structure at the
origin of measured data becomes a key issue.

In some occasions, measured data is given in the form of input-output
time-series that describes the effect of inputs on outputs (measured
states) of a network. When data is generated by a linear system, a
matrix transfer function describing the dynamic input-output behavior is generally obtained using system identification~\cite{Ljung}.
If the original state-space model is available or deducible, then the associated network structure can be readily obtained from it. However, a transfer function cannot, in general,
recover, or realize, the original state-space model since the
realization problem does not typically have a unique solution, i.e.,
different state-space realizations can generate the same input-output
behavior.
Since each of these realizations may suggest entirely different
network structures, it is in general impossible to identify network
structures from transfer functions alone. Therefore, more information,
beyond input-output data used to identify a transfer function, is
needed to prefer one state-space realization over another as a
description of a particular system\cite{csm}.

Another difficulty in the network reconstruction problem comes from
the fact that the realization problem becomes ill posed when some of the states
are unobservable or ``hidden'' (this happens with just
one hidden state \cite[pp. $78$]{zdg}). As a result, failure to
explicitly acknowledge the presence of hidden states and the resulting
ambiguity in network structures can lead to a deceptive and erroneous
process for network structure discovery. Consequently, determining
from measured data the presence or absence of a causal relationship
between two variables in a network is a challenging question. 

Motivated by this, we are focusing on the effect of hidden states in the network that we are aiming to reconstruct.  A new representation for LTI systems, called dynamical structure functions was introduced in \cite{08net_rec} and developed in \cite{russ1, russ2, book, enoch1, enoch2, enoch3} .
Dynamical structure functions capture information at an intermediate level between transfer function and state space
representation (see Figure~\ref{Fig:math}). Specifically, dynamical structure functions not only encode structural information at the measurement level, but also contain some
information about hidden states. Based on the theoretical results presented in \cite{08net_rec}, we proposed some guidelines for the design of an experimental data-acquisition protocol
which allows the collection of data containing sufficient information for the network structure reconstruction problem
to become solvable. In particular, we have shown that if nothing is known about the network, then reconstruction is impossible.  If, however, one can make the following reasonable assumptions about the data-collection experiments:
\begin{enumerate}
\item[(A.1)] for a network composed of $p$ measured species, the same
number of experiments $p$ must be performed; 
\item[(A.2)] each experiment
must independently control a measured species, i.e., control input $i$
must first affect measured species $i$,
\end{enumerate}
then reconstruction is possible.  Moreover, failure to meet the necessary informativity conditions results in a situation where any internal network structure fits the data equally well
(e.g. a fully decoupled network or a fully connected network).  If
biologists have already some information about the network, as it is
usually the case, then these conditions can be relaxed as explained
in~\cite{08net_rec}.  Using dynamical structure functions as a mean to solve the network reconstruction problem, the following aspects need to be considered (see Figure~\ref{Fig:math}):
\begin{figure}
\centering
\includegraphics[scale=0.34]{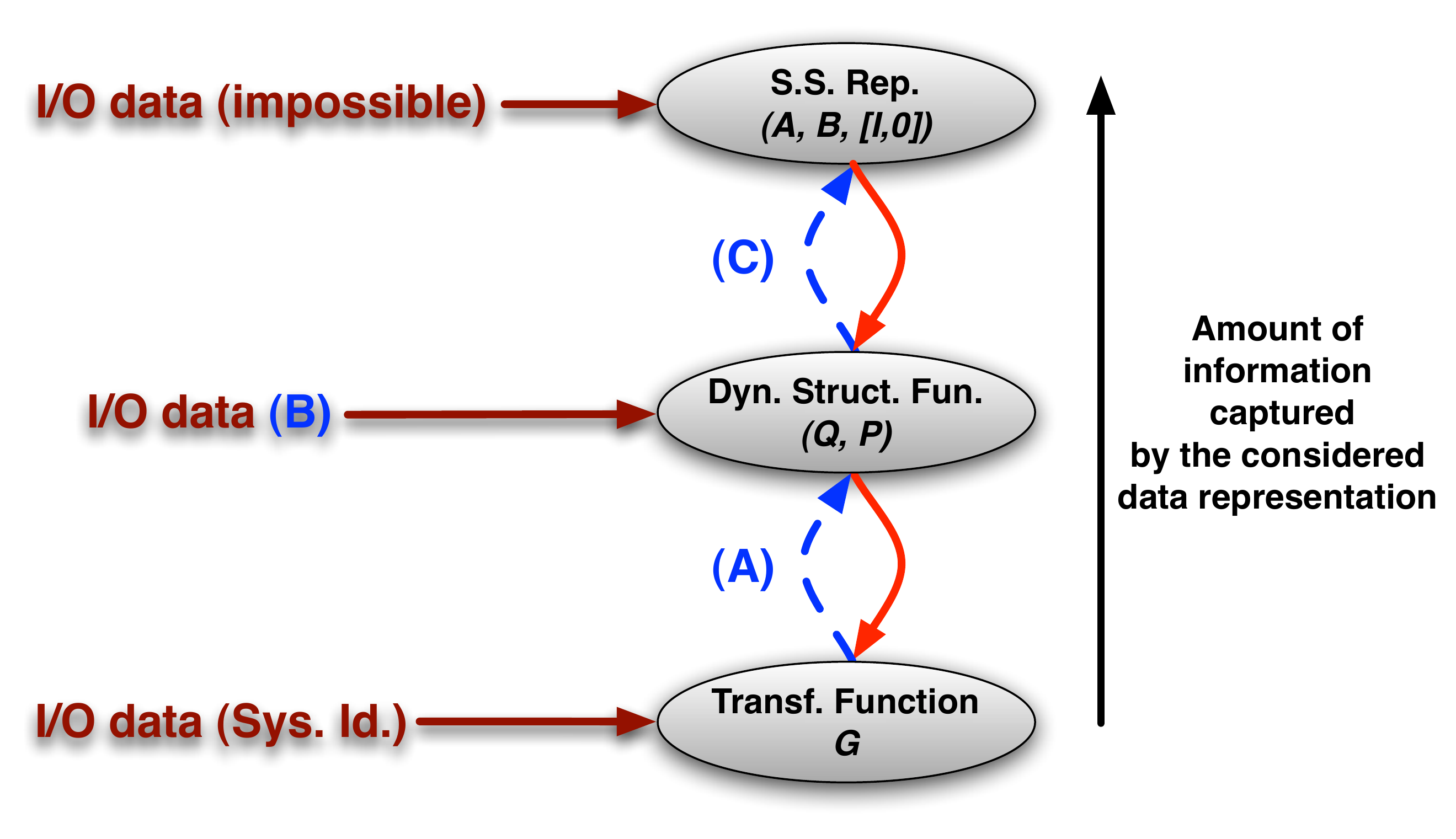}
\caption{Mathematical structure of the network reconstruction problem using dynamical structure functions. Red arrows mean ``uniquely determine'', blue arrows indicate our work.}\label{Fig:math}
\end{figure}

First (see (A) in Figure~\ref{Fig:math}), the properties of a dynamical
structure function and its relationship with the transfer
function associated with the same system need to be precisely established (this was done in \cite{08net_rec}).

Second (see (B) in Figure~\ref{Fig:math}), an efficient method is developed to reconstruct networks in the presence of noise and nonlinearities (this was done in~\cite{robust}). This method relies on the assumption that the conditions for network reconstruction presented above in (A.1) and (A.2) have been met. In our approach, we use the same information as traditional system identification methods, i.e., input-output data. However, with our method, steady-state (resp. time-series data) can be used to reconstruct the Boolean (resp. dynamical network) structure of the system (see \cite{robust} for more details). 

Third (see (C) in Figure~\ref{Fig:math}), once the dynamical structure function is obtained, as a main result of this paper, an algorithm for constructing a minimal order state-space representation consistent with such function is developed. 
In an application, this provides a way to estimate the complexity of the
system by determining the minimal number of hidden states in the
system. For example, in the context of biology it helps understand
the number of unmeasured molecules in a particular pathway: a low
number means that most molecules in that pathway have been identified
and measured, showing a good understanding of the system; while a
large number shows that there are still many unmeasured variables,
suggesting that new experiments should be carried out to better
characterize that pathway.

The outline of the paper is as follows.
Section~\ref{sec:systemmodel} reviews the definition of dynamical
structure functions and their properties. The main
result can be found in Section~\ref{sec:optimalD} where we propose a minimal order realization algorithm based on state-space
realizations and pole-zero analysis. Simulation and discussion are addressed in Section~\ref{Sec:simu}. Finally conclusions are presented in
Section~\ref{sec:conclusion}.

\section{System Model}\label{sec:systemmodel}
Consider a linear system (it can also be a linearization of some original nonlinear system) 
$\dot{{x}}=
{Ax}+{Bu}$, ${y}={Cx}$.
The transfer function associated with this system is given by ${G(s)} =
{C}(s{I}-{A})^{-1}{B}$.  Typically, we can use
standard system identification tools~\cite{Ljung}
to identify a transfer function ${G(s)}$ from input-output data.

Like system realization, network reconstruction also begins with the identification of a
transfer function, but it additionally attempts to determine the network structure
between measured states without imposing any additional structure on
the hidden states. As we have shown in \cite{08net_rec}, this requires a new representation of linear time-invariant systems: the dynamical structure function (defined later). An algorithm allowing the dynamical structure function to be obtained from input-output data is proposed in \cite{robust}. This paper assumes that the dynamical structure function has already been obtained from data, and will focus on finding one of its minimal state-space realizations.

The dynamical structure function is obtained as follows: First, we transform
$[{A},{B},{C}]$ to
$\left[{A}^o,{B}^o,\begin{bmatrix}{I}_p &
{0}\end{bmatrix}\right]$ without changing $G(s)$, where $p=\text{rank}(C)$. The linear system
dynamics then writes
\begin{equation}\label{eq:LTI}
 \begin{array}{cll}
   \left[\begin{array}{c}{\dot{y}}\\{\dot{z}} \end{array}\right]& = &
   \left[\begin{array}{cc}{A}^o_{11}&{A}^o_{12}\\{A}^o_{21}&{A}^o_{22}\end{array}\right]\left[\begin{array}{c}{y}\\{z}\end{array}\right]
   +\left[\begin{array}{c} {B}^o_{1}\\{B}^o_{2}\end{array}\right]{u} \\
   {y} & = &
   \left[\begin{array}{cc}{I}_p&{0}\end{array}\right]\left[\begin{array}{c}{y}\\{z}\end{array}\right]
   \end{array}
\end{equation}
where ${x}=({y},{z}) \in \Bbb R^{n^o}$ is the full
state vector, ${y}\in \Bbb R^p$ is a partial measurement of the
state, ${z}$ are the $n^o-p$ ``hidden'' states, and
${u}\in \Bbb R^m$ is the control input. In this work we
restrict our attention to situations where output measurements
constitute partial state information, i.e., $p< n^o$.  We consider
only systems with full rank transfer functions that do not have
entire rows or columns of zeros, since such ``disconnected'' systems
are somewhat pathological and only serve to complicate the
exposition without fundamentally altering our conclusions.

Taking the Laplace transforms of the signals in~(\ref{eq:LTI}) yields
\begin{equation}\label{eq:LTIlaplace}
 \begin{array}{lll}
   \left[\begin{array}{c}s{Y}\\s{Z} \end{array}\right]& = & \left[\begin{array}{cc}{A}^o_{11}&{A}^o_{12}\\{A}^o_{21}&{A}^o_{22}\end{array}\right]\left[\begin{array}{c}{Y}\\{Z}\end{array}\right] +\left[\begin{array}{c} {B}^o_{1}\\{B}^o_{2}\end{array}\right]{U}
     \end{array}
\end{equation}
where ${Y}$, ${Z}$, and ${U}$ are the Laplace
transforms of ${y}$, ${z}$, and ${u}$, respectively.
Solving for ${Z}$ gives
$${Z}=\left ( s{I} - {A}^o_{22} \right )^{-1}
{A}^o_{21} {Y} + \left ( s{I} - {A}^o_{22}
\right )^{-1} {B}^o_{2} {U}$$
Substituting this last expression of ${Z}$ into~(\ref{eq:LTIlaplace}) then yields
\begin{equation}
\label{eq:WV} s{Y} = {W}^o {Y} + {V}^o{U}
\end{equation}
where ${W}^o={A}^o_{11} + {A}^o_{12}\left ( s{I} -
  {A}^o_{22} \right )^{-1} {A}^o_{21}$ and
${V}^o={B}^o_{1} +{A}^o_{12}\left ( s{I} - {A}^o_{22} \right
)^{-1} {B}^o_{2}$.

Now, let ${R}^o$ be a
diagonal matrix formed of the diagonal terms of ${W}^o$ on its
diagonal, i.e., ${R}^o=\mbox{diag}\{{W}^o\} =
\mbox{diag}(W^o_{11}, W^o_{22}, ..., W^o_{pp})$. Subtracting ${R}^o{Y}$ from both sides of \eqref{eq:WV}, we obtain:
$$\left ( s{I} - {R}^o \right ) {Y} = \left (
  {W}^o-{R}^o \right ) {Y} + {V}^o {U}$$
Note that ${W}^o-{R}^o$ is a matrix with zeros on its
diagonal. We thus have:
\begin{equation}
  \label{eq:PQ} {Y} =  {QY} + {PU}
\end{equation}
where
\begin{equation}\label{eq:Q}
  {Q} = \left ( s{I} - {R}^o \right )^{-1} \left (
    {W}^o-{R}^o \right )
\end{equation}
and
\begin{equation}\label{eq:P}
  {P}=\left ( s{I}- {R}^o \right )^{-1} {V}^o
\end{equation}
Note that ${Q}$ is zero on the diagonal.

\begin{definition}
  Given the system~(\ref{eq:LTI}), we define the {\it dynamical
    structure function} of the system to be $[{Q},{P}]$.
\end{definition}

Note that, in general, ${Q(s)}$ and ${P(s)}$ carry a lot
more information than ${G(s)}$. This can be seen from the equality
${G}(s) = ({I}-{Q}(s))^{-1}{P}(s)$ (see~\cite{08net_rec}
for details).  However, ${Q(s)}$ and ${P(s)}$ carry less
information than the state-space model \eqref{eq:LTI} (see~\cite{robust}).

\begin{definition} A dynamical structure function,
  $[{Q},{P}]$, is said to be \textit{consistent} with a
  particular transfer function, ${G}$, if there exists a
  realization of ${G}$, of some order, and of the
  form~(\ref{eq:LTI}), such that $[{Q},{P}]$ are specified
  by~(\ref{eq:Q}) and~(\ref{eq:P}). Likewise, a realization is
  consistent with $[{Q},{P}]$ if that realization gives
  $[{Q},{P}]$ from~(\ref{eq:Q}) and~(\ref{eq:P}).
\end{definition}

%

\begin{definition}
  We say that a realization is ${G}$ minimal if this realization
  corresponds to a minimal realization of ${G}$. We say that a
  realization is $[{Q},{P}]$ minimal if this realization is consistent with $[{Q},{P}]$ and its order is smaller than or equal to that of all realizations consistent with
  $[{Q},{P}]$.
\end{definition}

The underlying principle to find a $[{Q},{P}]$ minimal
realization is to search for a realization with the minimal number of
hidden states.  Such a realization is characterized by the minimal
number of pole-zero cancellations in the transfer functions ${Q}$ and
${P}$.

\begin{proposition}\label{th:qp}
Given a dynamical system~(\ref{eq:LTI}) and the associated dynamical structure
  functions $[{Q},{P}]$ with ${R}^o$ constructed as
  explained above (see \eqref{eq:LTI}-\eqref{eq:P}), the following conditions must hold
\begin{align}
  \text{diag}\{{A}^o_{11}\}& = \lim_{s\rightarrow\infty}{R}^o(s)\label{eq:Ds};\\
  {A}^o_{11}-\text{diag}\{{A}^o_{11}\}&=\lim_{s\rightarrow \infty} s {Q}(s)\label{eq:Qs};\\
  {B}^o_1&=\lim_{s\rightarrow \infty} s {P}(s)\label{eq:Ps}.
\end{align}
\end{proposition}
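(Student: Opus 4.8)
The plan is to evaluate the three defining limits directly from the explicit formulas for ${R}^o$, ${Q}$, and ${P}$ given in~\eqref{eq:Q}–\eqref{eq:P}, using the fact that the resolvent $(s{I}-{A}^o_{22})^{-1}$ decays like $1/s$ as $s\to\infty$. The key observation is that every ``hidden-state'' contribution enters through this resolvent, so each such term vanishes in the appropriate limit, leaving only the contribution of the directly measured dynamics encoded in ${A}^o_{11}$ and ${B}^o_1$.

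First I would establish the asymptotic behaviour of the intermediate quantities ${W}^o$ and ${V}^o$. Since ${W}^o={A}^o_{11}+{A}^o_{12}(s{I}-{A}^o_{22})^{-1}{A}^o_{21}$ and the second term is $O(1/s)$, we have $\lim_{s\to\infty}{W}^o(s)={A}^o_{11}$, and similarly $\lim_{s\to\infty}{V}^o(s)={B}^o_1$. Because ${R}^o=\mbox{diag}\{{W}^o\}$ by construction, taking diagonals commutes with the limit and yields $\lim_{s\to\infty}{R}^o(s)=\mbox{diag}\{{A}^o_{11}\}=\text{diag}\{{A}^o_{11}\}$, which is exactly~\eqref{eq:Ds}.

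Next I would handle~\eqref{eq:Qs} and~\eqref{eq:Ps} by multiplying the defining expressions by $s$ before taking the limit. From~\eqref{eq:Q}, $s{Q}=s(s{I}-{R}^o)^{-1}({W}^o-{R}^o)$. The factor $s(s{I}-{R}^o)^{-1}$ tends to ${I}$ as $s\to\infty$ (since ${R}^o$ stays bounded, being the diagonal of a matrix converging to ${A}^o_{11}$), while $({W}^o-{R}^o)$ tends to ${A}^o_{11}-\text{diag}\{{A}^o_{11}\}$. Multiplying these two limits gives~\eqref{eq:Qs}. An identical argument applied to~\eqref{eq:P}, namely $s{P}=s(s{I}-{R}^o)^{-1}{V}^o$, with ${V}^o\to{B}^o_1$, yields~\eqref{eq:Ps}.

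The main obstacle, and the only point requiring care, is justifying that the products of limits behave as claimed rather than just asserting them. Specifically, I must confirm that $s(s{I}-{R}^o)^{-1}\to{I}$ even though ${R}^o=\mbox{diag}\{{W}^o(s)\}$ itself depends on $s$; this is legitimate because ${R}^o(s)$ is bounded and convergent, so $(s{I}-{R}^o)^{-1}={s}^{-1}({I}-{s}^{-1}{R}^o)^{-1}$ and the bracketed factor tends to ${I}$. I would also note that $({W}^o-{R}^o)$ has the off-diagonal structure of ${A}^o_{11}$ in the limit precisely because subtracting $\mbox{diag}\{{W}^o\}$ removes the diagonal, matching the zero-diagonal structure of ${Q}$. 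With these bounded-convergence justifications in place, each limit follows by elementary manipulation, and no delicate estimates beyond the $O(1/s)$ decay of the resolvent are needed.
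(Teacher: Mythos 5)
Your proof is correct and follows essentially the same route as the paper: both arguments rest on the $O(1/s)$ decay of $(s{I}-{A}^o_{22})^{-1}$, giving ${W}^o\to{A}^o_{11}$, ${V}^o\to{B}^o_1$, ${R}^o\to\text{diag}\{{A}^o_{11}\}$, and then extract the $s^{-1}$ coefficient of ${Q}$ and ${P}$. Your factorization of $s{Q}$ into the two convergent factors $s(s{I}-{R}^o)^{-1}\to{I}$ and $({W}^o-{R}^o)\to{A}^o_{11}-\text{diag}\{{A}^o_{11}\}$ is a slightly tidier packaging of the Neumann-series expansion the paper writes out explicitly, and your justification that the product of limits is legitimate (boundedness and convergence of ${R}^o(s)$) is exactly the point that needs care.
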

\begin{IEEEproof}
Eq.~\eqref{eq:Ds} is directly obtained from the definition of ${R}^o(s)$:
\begin{align*}
  \lim_{s\rightarrow\infty}{R}^o(s)&=\lim_{s\rightarrow\infty}\text{diag}\{{W}^o(s)\}\\
  &=\text{diag}\{\lim_{s\rightarrow\infty}{W}^o(s)\} =
  \text{diag}\{{A}_{11}^o\}
\end{align*}
Since the proofs for eq.~\eqref{eq:Qs} and~\eqref{eq:Ps} are very
similar, we focus on eq.~\eqref{eq:Qs} only.  Using the fact that
for any square matrix ${M}$, if $M^n \rightarrow 0$ when $n\rightarrow +\infty$, then  
$({I}-{M})^{-1}=\sum_{i=0}^{\infty}{M}^i$, we obtain, from
the definition of ${Q}$ given in \eqref{eq:Q},
${Q}(s)=\sum_{i=1}^{\infty}
s^{-i}{R}^{o~i-1}(s)\left({W}^o(s)-{R}^o(s)\right)$ and
${W}^o(s)={A}^o_{11}+\sum_{i=1}^{\infty}s^{-i}{A}^o_{12}{A}_{22}^{o~i-1}{A}^o_{21}$, when $s\rightarrow+\infty$.
Hence,
${Q}(s)=({A}^o_{11}-{R}^o(s))s^{-1}+{r}(s)$, in
which ${r}(s)$ is a matrix polynomial of $s$, whose largest degree is
$-2$. Finally, multiplying by $s$ on both sides and taking the limit
as $s$ goes to $\infty$ results in eq.~\eqref{eq:Qs}.  A similar
argument can be used to prove eq.~\eqref{eq:Ps}.
\end{IEEEproof}

\begin{remark}
This Proposition reveals an important property of dynamical structure functions: they encode the direct causal relationships between observed variables. 
\end{remark}
%

We present hereafter an illustrative example to help fix the ideas. 
\begin{example} \label{ex1}
Consider a network with the structure depicted in Fig. \ref{fig:ex1}. A linear state-space representation of this network is given by
\begin{equation*} \begin{array}{rcl}
\dot{x} &=& 
\begin{bmatrix}a_{11} & 0 & a_{13} & 0 & 0 \\
			0 & a_{22} & 0 & a_{24} & 0 \\
			0 & a_{32} & a_{33} & 0 & a_{35} \\
			a_{41} & 0 & 0 & a_{44} & 0 \\
			0 & a_{52} & 0 & 0 & a_{55} \end{bmatrix} x + 
\begin{bmatrix}b_{11} & 0 \\ 0 & b_{22} \\ 0 & 0 \\ 0 & 0 \\ 0 & 0 \end{bmatrix} u \\
y &=& \begin{bmatrix} I_3 & 0 \end{bmatrix} x
\end{array} \end{equation*}
where $I_3$ is the $3\times3$ identity matrix. Following the definitions in \eqref{eq:Q} and \eqref{eq:P}, we can write down the corresponding dynamical structure function $[{Q},{P}]$ as
\begin{align*}
Q &= \begin{pmatrix} 0 & 0 & \frac{a_{13}}{s-a_{11}} \\ \frac{a_{24}a_{41}}{(s-a_{22})(s-a_{44})} & 0 & 0 \\ 0 & \frac{a_{35}a_{52} + a_{32}(s-a_{55})}{(s-a_{33})(s-a_{55})} & 0 \end{pmatrix}, \\
P &= \begin{pmatrix} \frac{b_{11}}{s-a_{11}} & 0 \\ 0 & \frac{b_{22}}{s-a_{22}} \\ 0 & 0 \end{pmatrix}.
\end{align*}

\begin{figure}[hb] \centering
\includegraphics[width=0.7\linewidth]{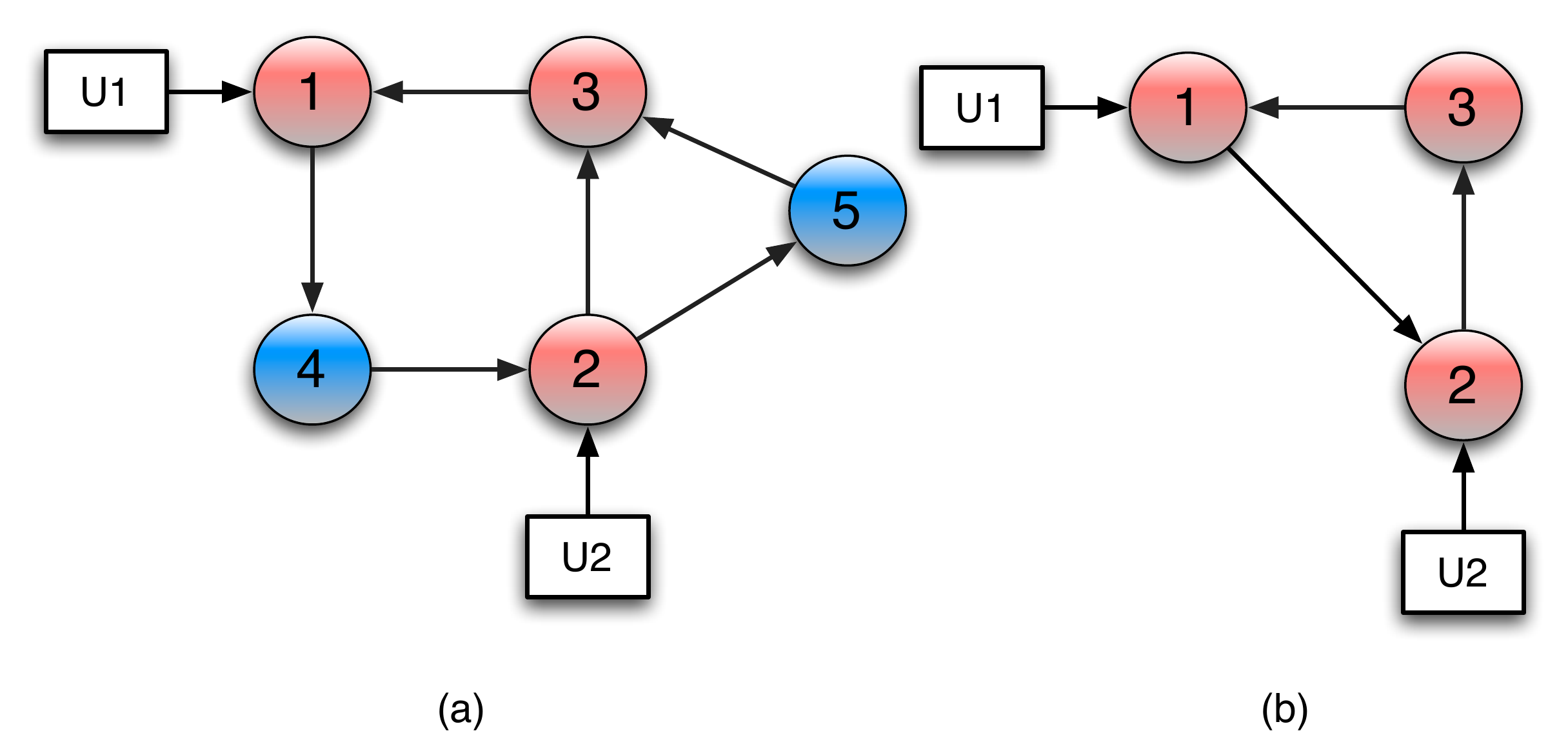}
\caption{\textbf{(a)} An example system with two inputs, three measured states (states $1$, $2$, and $3$) and two hidden states (states $4$ and $5$). \textbf{(b)} The corresponding dynamical network structure.}
\label{fig:ex1}
\end{figure}

Consistent with Proposition~\ref{th:qp}, we can check using the above that:
\begin{align*}
  \lim_{s\rightarrow \infty} s {Q}(s)&=\begin{pmatrix} 0 & 0 & a_{13}  \\
	0 & 0 & 0  \\
	0 & a_{32} & 0 \end{pmatrix} ;\\
\lim_{s\rightarrow \infty} s {P}(s)&=s\begin{pmatrix} \frac{b_{11}}{s-a_{11}} & 0 \\ 0 & \frac{b_{22}}{s-a_{22}} \\ 0 & 0 \end{pmatrix}=\begin{pmatrix}b_{11} & 0 \\ 0 & b_{22} \\ 0 & 0  \end{pmatrix}.
\end{align*}
\end{example}

Generally, there exist many realizations consistent with $[{Q},{P}]$.
In the following section, we focus on finding a $[{Q},{P}]$ minimal
realization $\left(A,B,\begin{bmatrix}
I & 0 \end{bmatrix}\right)$, i.e., a realization which is consistent with $[{Q},{P}]$
and which has minimal order, that is, with minimal dimension for $A$ (and hence the lowest possible complexity).

\section{Algorithm to find a $[{Q},{P}]$ minimal
  realization}\label{sec:optimalD}
From a dynamical structure function $[{Q},{P}]$ we cannot
reconstruct $[{W}^o, {V}^o]$ since there is no information
regarding the diagonal transfer function matrix ${R}^o$. 
Given $[{Q},{P}]$ and a diagonal proper transfer function
matrix ${R}$, a minimal realization of
$[{W}~~{V}]= [(s{I} - {R}){Q} +
{R}~~(s{I}-{R}){P}]$ can be obtained
as follows. We know that:
\begin{equation}\label{eq::WVrealization}
  [{W}~~{V}]=[{A}_{11}~~{B}_1]+{A}_{12}(s{I}-{A}_{22})^{-1}[{A}_{21}~~{B}_2]
\end{equation}
The idea is to start with an arbitrarily chosen ${R}$, and then use a state-space realization approach to find a ${R}^*$ 
which minimizes the order of a minimal realization of
$[{W}~~{V}]$. 

%

\begin{lemma}\label{lemma:systemzero}
Suppose ${W}$, ${V}$, ${A}=\begin{bmatrix} A_{11} & A_{12} \\ A_{21} & A_{22} \end{bmatrix}$ and ${B}=\begin{bmatrix} B_{1} \\ B_{2} \end{bmatrix}$ are defined as in
eq.~\eqref{eq::WVrealization}, then ${V}$ and ${G}$ share the
same zeros.
\end{lemma}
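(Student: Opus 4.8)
The plan is to realize both $G$ and $V$ in state space and to compare their Rosenbrock system matrices, exploiting the special output structure $\begin{bmatrix} I & 0\end{bmatrix}$. First, I would observe that the block structure in~\eqref{eq::WVrealization} exhibits $\left(A,B,\begin{bmatrix} I & 0\end{bmatrix}\right)$ as a realization of $G$: eliminating the hidden states exactly as in the derivation of~\eqref{eq:WV} gives $C(sI-A)^{-1}B=(sI-W)^{-1}V$, and since $sI-W=(sI-R)(I-Q)$ and $V=(sI-R)P$ while $G=(I-Q)^{-1}P$, this equals $G$. Simultaneously, $V=B_1+A_{12}(sI-A_{22})^{-1}B_2$ is realized by $(A_{22},B_2,A_{12},B_1)$ with feedthrough $B_1$. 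Recalling that the zeros of a realization are exactly the complex numbers at which its Rosenbrock system matrix drops rank below its normal rank, I would recast the whole claim as a rank comparison between the two system matrices.

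Next, I would write the two Rosenbrock system matrices,
\[
S_G(s) = \begin{bmatrix} sI_p - A_{11} & -A_{12} & -B_1 \\ -A_{21} & sI - A_{22} & -B_2 \\ I_p & 0 & 0\end{bmatrix}, \qquad S_V(s) = \begin{bmatrix} sI - A_{22} & -B_2 \\ A_{12} & B_1\end{bmatrix},
\]
and use the identity block $I_p$ in the output row of $S_G$ to perform unimodular row operations: adding $(sI_p - A_{11})$ times that row to the first block row and $-A_{21}$ times it to the second clears the entire $y$-state block column except for the $I_p$ entry. After a block permutation this brings $S_G(s)$ into the form $\operatorname{diag}\!\left(I_p,\; \widetilde S(s)\right)$, where $\widetilde S(s)$ coincides with $S_V(s)$ up to a block-row swap and a sign change. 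Since all of these operations are unimodular over $\mathbb{R}[s]$, the Smith form of $S_G$ equals that of $S_V$ augmented by a trivial $I_p$ block; in particular $\text{rank}\,S_G(s) = p + \text{rank}\,S_V(s)$ for every $s$, and the nontrivial invariant polynomials of the two system matrices coincide.

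Finally, I would conclude that $S_G$ drops rank below its normal rank at some $s_0$ if and only if $S_V$ does so, and with the same multiplicity, so the two realizations — and hence $G$ and $V$ — share exactly the same zeros. The step I expect to be the main obstacle is verifying that the reductions are genuinely unimodular (polynomial transformations with nonzero constant determinant) rather than merely rank-preserving pointwise, since it is this stronger fact that yields equality of the zeros \emph{with multiplicities} and not just as sets. A secondary point to pin down is the intended reading of ``zeros'' (transmission versus invariant), which agree here once the realizations are taken to be minimal, a reduction that does not alter the zero sets.
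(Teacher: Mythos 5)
Your proof is correct, but it takes a genuinely different route from the paper's. The paper argues at the transfer-function level: it observes that $sI-{W}$ is the Schur complement of $sI-{A}_{22}$ in $sI-{A}$, so $\det(sI-{W})=\det(sI-{A})/\det(sI-{A}_{22})$ is a ratio of characteristic polynomials, and then combines the identity $(sI-{W}){G}={V}$ with a cited classical result on how zeros behave under multiplication by a square nonsingular rational matrix. You instead compare the Rosenbrock system matrices of the two realizations $\left({A},{B},\begin{bmatrix}I & 0\end{bmatrix},0\right)$ of ${G}$ and $({A}_{22},{B}_2,{A}_{12},{B}_1)$ of ${V}$, and use the identity output block to exhibit a unimodular equivalence $S_G(s)\sim\operatorname{diag}(I_p,S_V(s))$ (modulo a sign slip: you must \emph{subtract}, not add, $(sI_p-{A}_{11})$ times the output row to clear the first block column). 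Your route is more self-contained and gives equality of the invariant zeros together with their multiplicities, whereas the paper's one-line determinant identity delegates the actual zero bookkeeping to the reference. The caveat you flag (transmission versus invariant zeros) is genuine but does not undermine the result in context: what you literally establish is that the two \emph{realizations} share invariant zeros, and that is precisely the form in which the lemma is used immediately afterwards, where additional zeros of ${V}$ induced by a poor choice of ${R}$ are identified with uncontrollable eigenvalues of the realization of ${G}$ rather than with transmission zeros of the (fixed) transfer function; the paper's own statement and proof are no more precise on this point than yours.
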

\begin{proof}
Since $s{I}-{W}$ is the Schur complement of
  $s{I}-{A}_{22}$ in $s{I}-{A}$, then
\begin{equation}\label{eq:sI-W}
\det(s{I}-{W})=\frac{\det(s{I}-{A})}{\det(s{I}-{A}_{22})}.
\end{equation}
Recall that ${V}={B}_1+{A}_{12}(s{I}-{A}_{22})^{-1}{B}_2$. Since $(s{I}-{W}){G}={V}$, we thus have that ${V}$ and ${G}$ share the same zeros \cite[page~$153$]{Fallside}.
\end{proof}
Given a dynamical structure function $[{Q},{P}]$, a random
choice of a proper diagonal transfer function matrix ${R}$ is
likely to result in additional zeros in
${V}=(s{I}-{R}){P}$.  From
Lemma~\ref{lemma:systemzero}, this will lead to additional zeros in
${G}$ which are associated to uncontrollable eigenvalues of the
considered realization \cite[Section 4]{MIT} and of course does not lead to 
a minimal realization in eq.~\eqref{eq::WVrealization}. 
At this stage the following question arises: how can we find a proper diagonal transfer function matrix ${R}^*$ such that a minimal realization of
$[{W}~~{V}]$ is a $[{Q},{P}]$ minimal realization, i.e., 
\begin{equation}\label{eq:D}
R^*=\textbf{argmin}_R ~\text{deg}\left\{   (s{I}-{R})s^{-1}[s{Q}~~s{P}]+[{R}~~{0}]\right\},
\end{equation}
where deg is the McMillan degree \cite{zdg}. 
Note that, since there are many choices for ${R}^*$ that minimize
the order of minimal realizations of $[{W}~~{V}]$, a chosen
${R}^*$ may be different from ${R}^o$. 

Assume that all elements in $[{Q}~~{P}]$ only have simple poles. This assumption can be relaxed but we adopt it here for simplicity. Also assume that $[{Q}~~{P}]$ does not possess any poles at $0$ (otherwise we can change eq.~\eqref{eq:D} to $(s{I}-{R})(s-a)^{-1}[(s-a){Q}~~(s-a){P}]+[{R}~~{0}]$, where $a\in\mathbb{R}$ is not a pole of $[{Q}~~{P}]$).
\begin{proposition}\label{prop:d}
  Assume $[{I-Q}~~{P}]$ only has simple poles and does not have any zeros\footnote{These assumptions can be relaxed, see Section $3.6$ of \cite{yephd} for more details.}. A minimal order realization of $[{W}~~{V}]$ in~\eqref{eq::WVrealization} can be achieved using a constant diagonal matrix
  ${R}^*$.
\end{proposition}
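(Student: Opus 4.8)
The plan is to reduce the minimization in \eqref{eq:D} to a statement about residues at the (simple) poles of $[Q\;\;P]$ and then to show that constant matrices suffice to realize the resulting lower bound. First I would record the two algebraic identities that drive everything: from $W=(sI-R)Q+R$ and $V=(sI-R)P$ one gets
\begin{equation*}
[W\;\;V]=(sI-R)[Q\;\;P]+[R\;\;0],\qquad [sI-W\;\;V]=(sI-R)[I-Q\;\;P].
\end{equation*}
When $R$ is a constant diagonal matrix, $[R\;\;0]$ is a pure feedthrough term and contributes nothing to the dynamic order, so the McMillan degree of $[W\;\;V]$ equals that of $(sI-R)[Q\;\;P]$.

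Next I would exploit the simple-poles hypothesis. Writing $[Q\;\;P]=\sum_k (s-\lambda_k)^{-1}\Pi_k$ in partial fractions (no constant term, since $[Q\;\;P]$ is strictly proper), the McMillan degree of $(sI-R)[Q\;\;P]$ for constant diagonal $R$ is $\sum_k\operatorname{rank}\big((\lambda_k I-R)\Pi_k\big)$, because the residue at $\lambda_k$ becomes $(\lambda_k I-R)\Pi_k$ and a rank-zero residue signals a fully cancelled pole. Multiplying row $i$ by the scalar $(s-r_i)$ with $r_i=\lambda_k$ annihilates the $i$-th row of $\Pi_k$; hence setting the diagonal entries of a constant $R^*$ equal to well-chosen pole locations zeros out rows of the residues and lowers the rank sum. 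The achievability part of the proof then amounts to selecting, for each row, the pole assignment that minimizes this rank sum and checking that the resulting constant $R^*$ realizes the bound.

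The main obstacle is the lower bound: showing that no proper non-constant diagonal $R$ can do better. Here I would use the second identity together with the no-zeros hypothesis on $[I-Q\;\;P]$. Any non-constant proper $R_{ii}=n_i/d_i$ makes $s-R_{ii}=(sd_i-n_i)/d_i$ carry the poles of $R_{ii}$ (the roots of $d_i$); multiplying row $i$ of $[I-Q\;\;P]$ by this factor injects those poles into $[sI-W\;\;V]$, and since $[I-Q\;\;P]$ has no zeros there is nothing to cancel them, so they persist as genuine modes of $[W\;\;V]$. The delicate point is a bookkeeping argument: a higher-degree $R_{ii}$ can place extra zeros to cancel several original poles of row $i$ at once, but each such cancellation is paid for by a newly introduced pole of $R_{ii}$, so the net reduction in dynamic order per row never exceeds the single cancellation already achievable with a constant. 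Making this trade-off precise---quantified on residue ranks rather than on mere pole counts, and invoking the absence of zeros to guarantee the injected poles are irreducible---is the crux of the argument and yields that the optimal $R^*$ in \eqref{eq:D} may be taken constant.
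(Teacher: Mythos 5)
Your proposal is correct and follows essentially the same route as the paper: the heart of both arguments is the observation that a non-constant diagonal entry $R_{ii}=n_i/d_i$ injects $\deg(d_i)$ new poles which the no-zeros hypothesis on $[I-Q\;\;P]$ prevents from being cancelled, while its extra numerator roots can cancel at most that many of the (simple) original poles, so replacing each entry by a constant never increases the McMillan degree. The residue-rank bookkeeping you add for the constant case is a harmless elaboration that the paper defers to Steps 1--2 of its algorithm rather than to this proposition.
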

\begin{proof}
  Assume ${R}^*$ has at least one term on the diagonal with the degree
  of the numerator greater or equal to $1$, e.g., suppose the $i^{th}$ term in
  $(s{I}-{R}^*)s^{-1}=\frac{(s+b)\epsilon_i(s)}{s\phi_i(s)}$ with any
  $b\in\mathbb{R}$ and $deg(\epsilon_i(s))=deg(\phi_i(s))\ge1$, where
  $deg(\cdot)$ returns the degree of a polynomial. 
  Hence, the product $(s{I}-{R}^*)s^{-1}[s{Q}~~s{P}]$
  will introduce $deg(\phi_i(s))$ new poles and, due the assumption of
  simple poles, can at most eliminate $deg(\epsilon_i(s))=deg(\phi_i(s))$
  poles since $[{I-Q}~~{P}]$ does not have any zeros. As a consequence, we can change the $i^{th}$ term from $\frac{(s+b)\epsilon_i(s)}{s\phi_i(s)}$ to $\frac{s+a}{s}$ without increasing the order. Doing this along all the elements of ${R}^*$ proves the result.
\end{proof}

If ${R}^*$ is a constant matrix, the term $[{R}^*~~{0}]$ in
eq.~\eqref{eq:D} is also a constant matrix. Therefore, the order of a
minimal realization is only determined by
$(s{I}-{R}^*)s^{-1}[s{Q}~~s{P}]\triangleq
{N}[s{Q}~~s{P}]$. Thus, finding the ``optimal'' ${R}^*$ which leads to the minimal order in eq.~\eqref{eq:D}
is equivalent to finding a diagonal proper transfer matrix ${N}$ \footnote{${N}$ with corresponding minimal realization
$({A}_{2},{B}_{2},{C}_{2},{I})$ is restricted to the set of matrices of the form $(s{I}-{R}^*) s^{-1}$ with a constant ${R}^*$ from Proposition~\ref{prop:d}.} such that ${N}[s{Q}~~s{P}]$ has as few poles as possible. Based on this idea, the following algorithm is proposed:

\noindent{\em {\bf Step 1:} Find a Gilbert's realization of the dynamical structure
  function.}\\
First, using the results in \cite[Lemma~1]{08net_rec}, we find a minimal realization
$({A}_{1},{B}_{1},{C}_{1},{D}_{1})$ of
$[s{Q}~~s{P}]$. When $[s{Q}~~s{P}]$ has $l$ simple
poles, using Gilbert's realization \cite{gilbert} gives
\begin{equation*}
[s{Q}~~s{P}]=\sum_{i=1}^l
\frac{{K}_i}{s-\lambda_i}+\lim_{s\rightarrow\infty}[s{Q}~~s{P}],
\end{equation*}
where ${K}_i =
\lim_{s\rightarrow\lambda_i}(s-\lambda_i)[s{Q}~~s{P}]$ and has
rank $1$ since we are assuming that $[s{Q}~~s{P}]$ has simple poles.

Consider a matrix decomposition of ${K}_i$ of the
following form:
\begin{equation*}
{K}_i={E}_i{F}_i,~\forall i,
\end{equation*}
where ${E}_i \in \mathbb{R}^{p}$ and
${F}_i=({E}_i^T{E}_i)^{-1}{E}_i^T{K}_i$.
Then ${A}_1=\text{diag}\{\lambda_i\}\in\mathbb{R}^{l\times l}$,
${B}_1=\begin{bmatrix} {F}^T_1 & {F}^T_2 & \ldots &
  {F}^T_l \end{bmatrix}^T$, ${C}_1=\begin{bmatrix}
  {E}_1 & {E}_2 & \ldots & {E}_l \end{bmatrix}$ and
${D}_1=\lim_{s\rightarrow\infty}[s{Q}~~s{P}]$.

\noindent{\em {\bf Step 2:} Find the maximal number of cancelled poles.}\\
We define $\Phi$ as a largest subset of $\{{E}_1,\cdots,{E}_l\}$ such that all the elements in $\Phi$ are mutually orthogonal. We also define $\phi$ as the cardinality of $\Phi$. Computationally, $\phi$ can be obtained using the algorithm presented in the Appendix. We claim that $\phi$ is equal to the maximum number of poles we can eliminate (the proof is in the Appendix). Therefore, the minimal order of $[{W}~~{V}]$ is
$$l-\phi.$$
As a consequence, the order of the minimal reconstruction is the dimension of
${A}_{11}$ (the constant $p$) plus the minimal dimension of
${A}_{22}$ (obtained above): $p+l-\phi$.

\noindent{\em {\bf Step 3:} Construct ${R}^*$ to obtain the minimal reconstruction.}\\
Once we have $\Phi$, using eq.~\eqref{eq:req} and ${R}^*=s{I}-s{N}$, we know that ${N}(\lambda_i)[j,j]=0$ implies ${R}^*[j,j]=\lambda_i$. Consequently, each element in the set $\Phi$ will determine at least one element in ${R}^*$. This last fact can be used to construct ${R}^*$ element by element. Once ${R}^*$ is found, we can obtain ${A}$ and ${B}$ using eq.~\eqref{eq::WVrealization}.


\section{Illustrative example}\label{Sec:simu}
\begin{example}\label{ex:min_realization}
Consider a dynamical structure function $[{Q},{P}]$:
\begin{equation*}
 [{Q} \ | \ {P}]=\begin{bmatrix}
0 & \frac{1}{s+2} & \frac{1}{s+3} & |&\frac{1}{s+4}\\
\frac{1}{s+1} & 0 & \frac{1}{s+3} & |&\frac{1}{s+4} \\
\frac{1}{s+1} & \frac{1}{s+2} & 0 & |&\frac{1}{s+4}
 \end{bmatrix}.
\end{equation*}
We first compute the McMillan degree of the corresponding transfer function: $\text{deg}\{G\}=\text{deg}\{(I-Q)^{-1}P)\}=4$, meaning that a $4^{th}$ order state-space model is enough to realize the transfer function. It is interesting to look at the minimal order realization consistent with the dynamical structure function. The different steps of the algorithm proposed in the previous section successively yield the following:

{\em Step 1:}
A minimal Gilbert realization of $s[{Q},{P}]$ is
\begin{align*}
&{A}_1=\text{diag}\{-1,-2,-3,-4\}, ~ {B}_1=\text{diag}\{2,2,2,4\},\\
&{C}_1=\begin{bmatrix}
 0  &  -1 & -1.5  &  -1\\
     -0.5  &   0 & -1.5   & -1\\
   -0.5 &   -1    & 0 &   -1
\end{bmatrix},~
{D}_1=\begin{bmatrix}
      0 &  1  & 1  & 1\\
      1 &  0  & 1 &  1\\
   1  & 1 &  0 &  1
\end{bmatrix}.
\end{align*}

{\em Step 2:}
By definition, ${E}_i={C}_1{v}_i$ where
${v}_i \in \mathbb{R}^{4}$ has $1$ in its $i^{th}$ position and
zero otherwise. Thus,
$$\{{E}_1,\cdots,{E}_4\}=
\left\{\begin{bmatrix}0\\-0.5\\-0.5\end{bmatrix},
  \begin{bmatrix}-1\\0\\-1\end{bmatrix}, \begin{bmatrix}-1.5\\-1.5\\0 \end{bmatrix},\begin{bmatrix}-1\\-1\\-1\end{bmatrix}\right\}.$$
Furthermore, $\phi$ is $1$ and the order of a minimal realization of the given dynamical structure function is $p+l-\phi=3+4-1=6$. Hence,
the system must contain at least $3$ hidden states.

{\em Step 3:} ${R}^*$ can be chosen as
$\text{diag}\{a,-1,-1\}$, $\text{diag}\{-2,a,-2\}$,
$\text{diag}\{-3,-3,a\}$, or $\text{diag}\{-4,-4,-4\}$ for any
$a\in\mathbb{R}$.

The reconstructed networks are represented in Fig.~\ref{Fig}. There are three measured (red) nodes, labeled $1,2,3$ and by the analysis above, there are at least three hidden nodes such that the corresponding realization is consistent with the dynamical structure function. The red connections between measured nodes are the same for all candidate networks which is in accordance with Proposition~\ref{th:qp}. Dashed lines correspond to the connections between hidden and measured nodes. 

\begin{figure}[ht]
\centering
\includegraphics[scale=0.3]{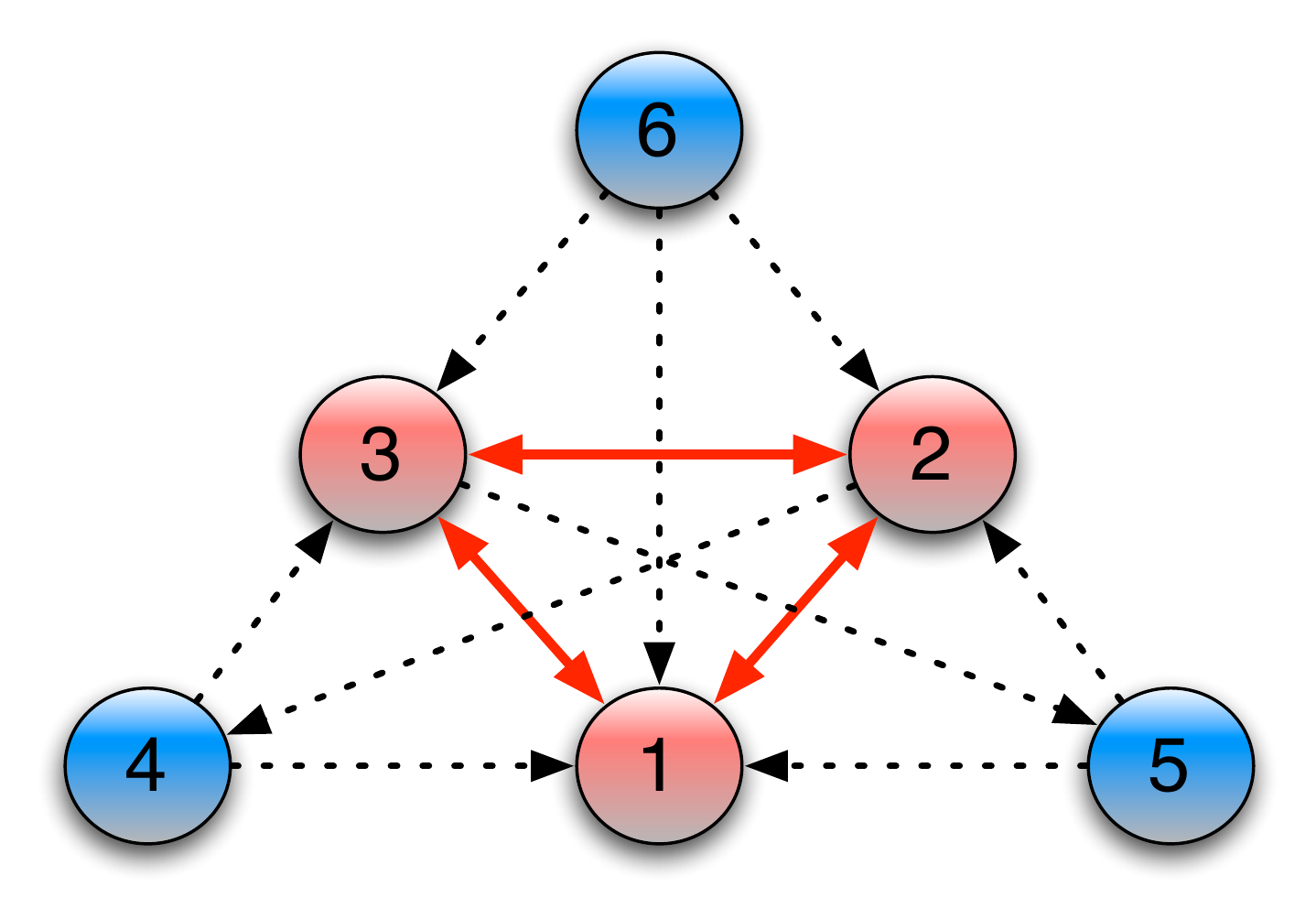}
\includegraphics[scale=0.3]{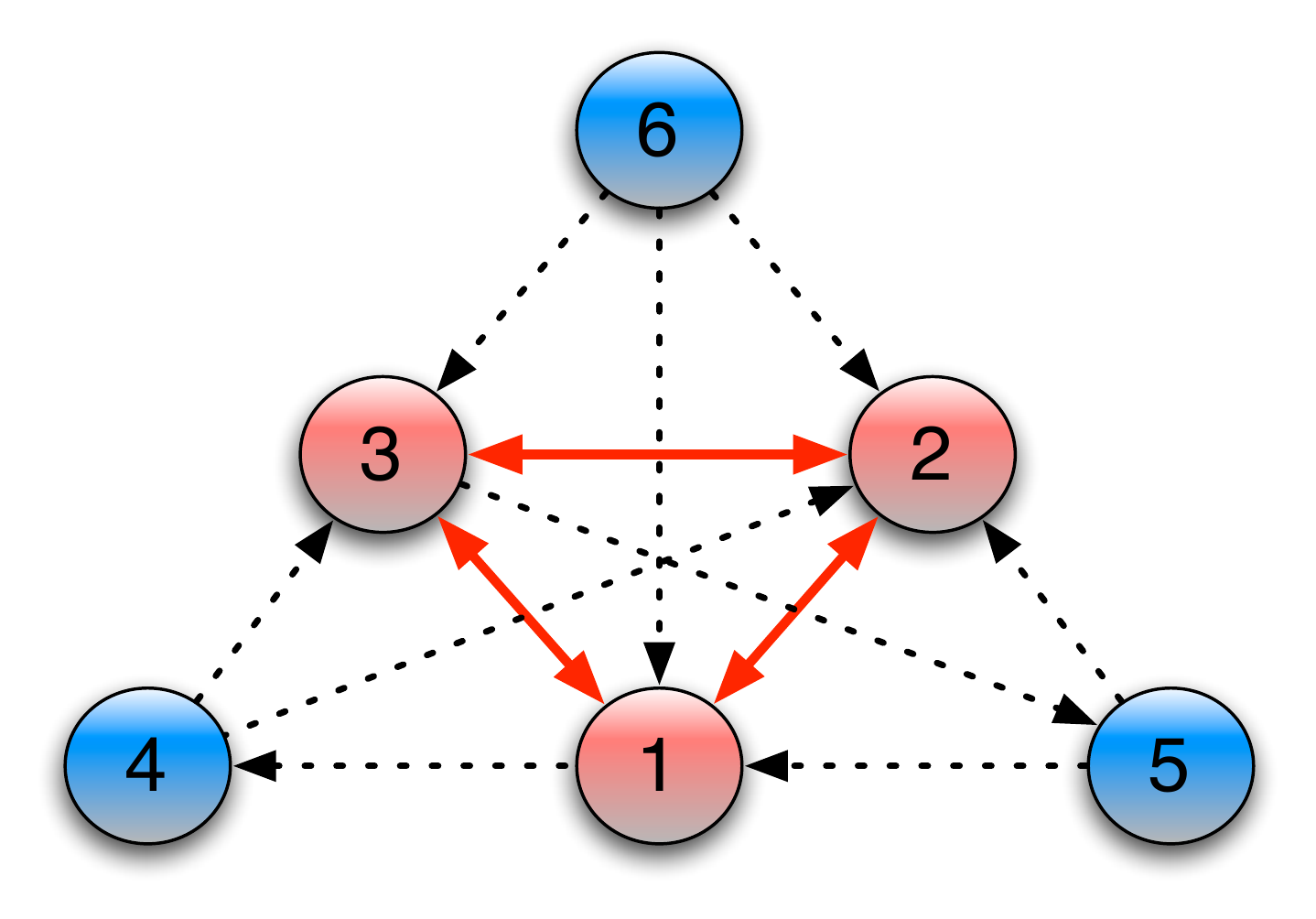}
\includegraphics[scale=0.3]{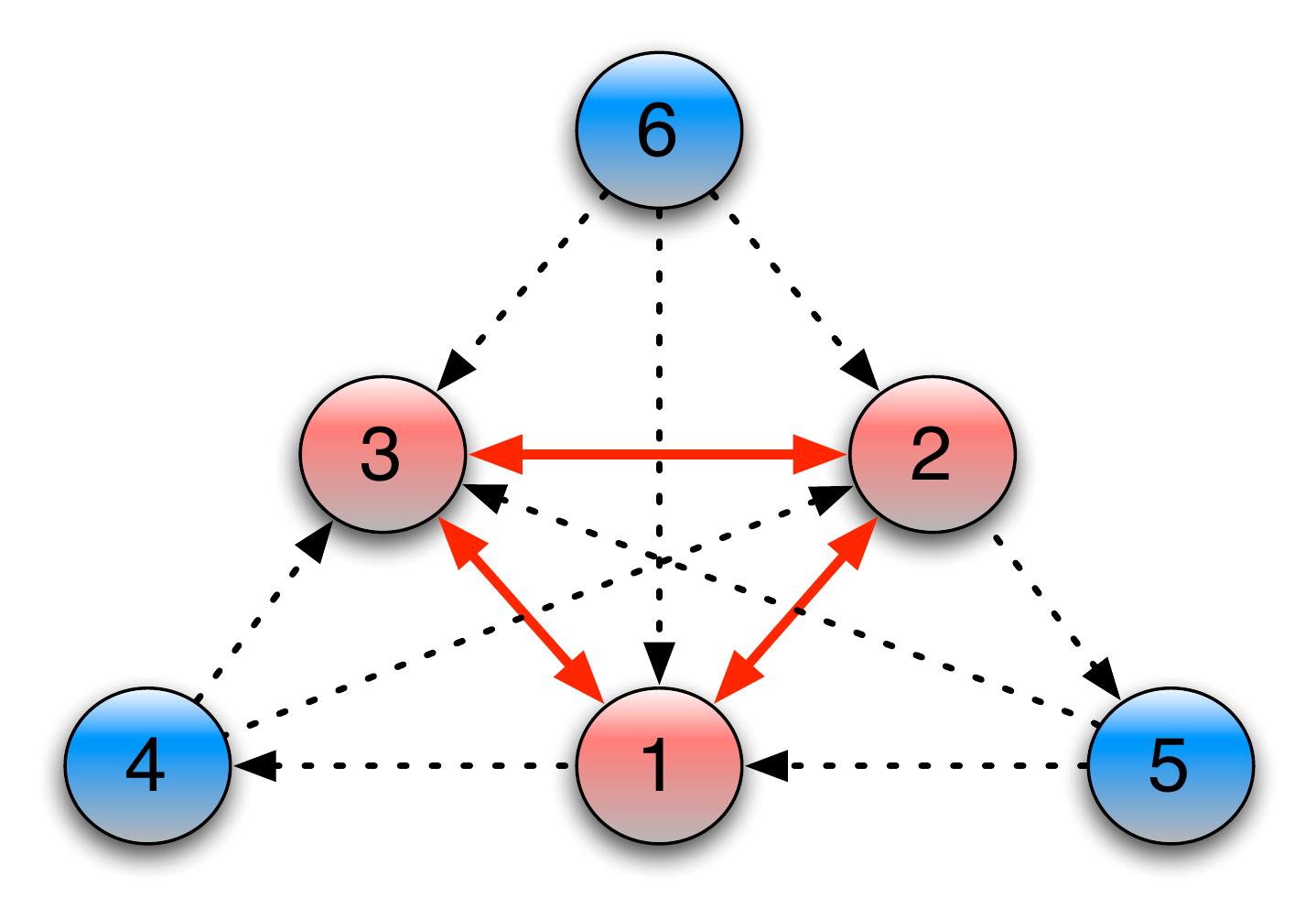}
\includegraphics[scale=0.3]{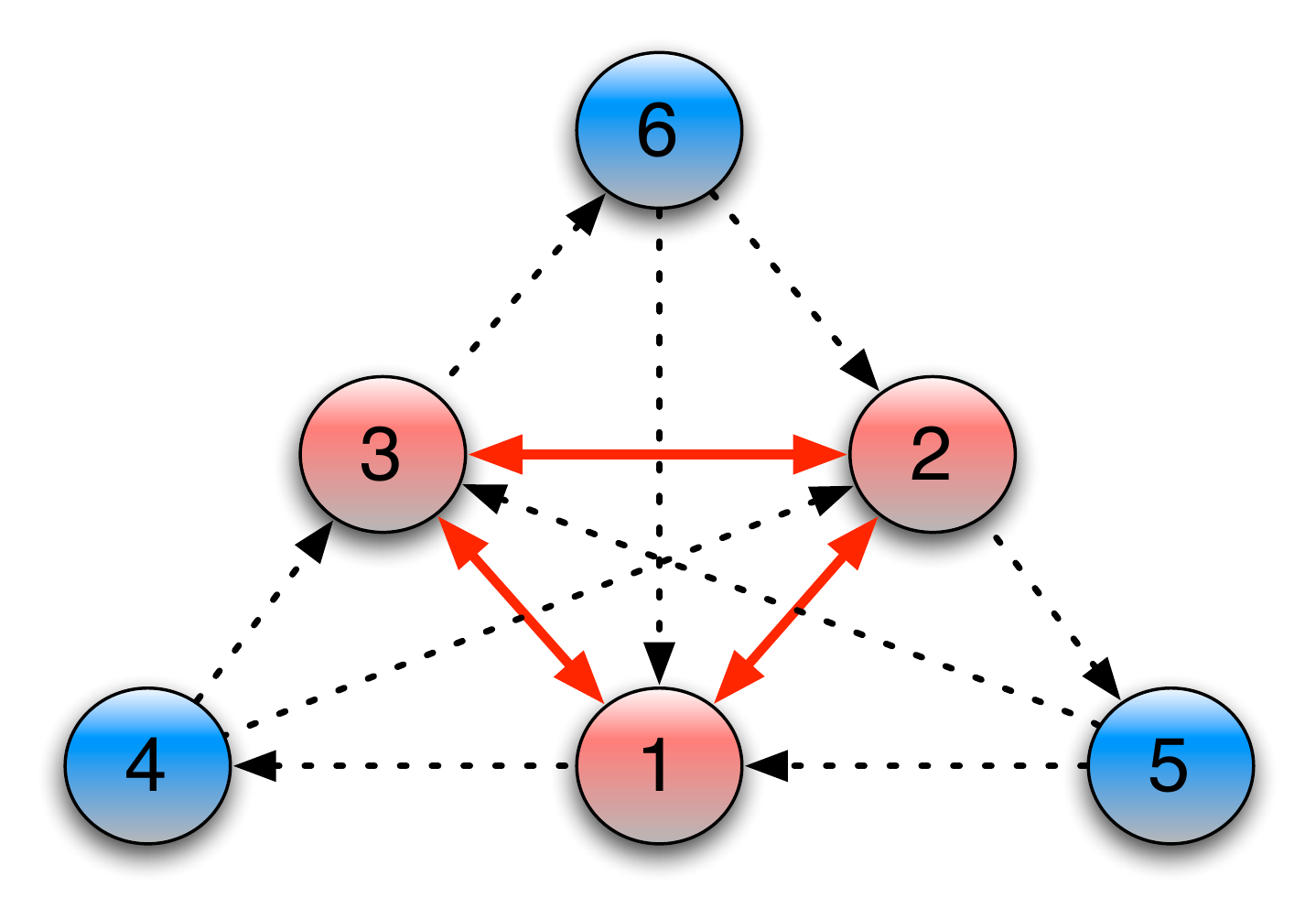}
\caption{Topologies corresponding to the four $[{Q},{P}]$ minimal realizations. 
The measured nodes are colored red, while the hidden ones blue. Red connections between
measured nodes are the same for all the networks due to Proposition~\ref{th:qp}. Each node has a self-loop but we omit it for simplicity.}
\label{Fig} 
\end{figure}

From a biological perspective, this indicates that there are at least $3$ unmeasured species interacting with the measured species. Of course, the ``true'' biological system might be even more complicated, i.e., it might have more than $6$ species. Yet, when more states are measured, the dynamical structure functions can be easily updated and a new search for a minimal realization of the updated system can be performed to reveal the corresponding minimal number of hidden states.
\end{example}

\section{Conclusions}\label{sec:conclusion}

In this paper, we have presented a method for obtaining a minimal order realization consistent with a given dynamical structure function. We show that the minimal order realization of a given dynamical structure function can be achieved by choosing a constant diagonal matrix ${R}^*$. This provides a way to estimate the complexity of the
system by determining the minimal number of hidden states that needs to be considered in the reconstructed network.  For
example, in the context of reconstruction of biological networks from data, it helps to understand the minimal number of
unmeasured molecules in a particular pathway.


\section{Acknowledgement}
The authors want to  thank the anonymous reviewers for their feedback which has helped in improving the quality of the manuscript. Guy-Bart Stan, Ye Yuan and Jorge Gon\c{c}alves gratefully acknowledge the support of EPSRC under the projects EP/E02761X/1, EP/I03210X/1 and the support of Microsoft Research through the PhD Scholarship Program of Ye Yuan.  Sean Warnick gratefully acknowledges the support of grants AFRL FA8750-09-2-0219 and FA8750-11-1-0236.  Guy-Bart Stan also gratefully acknowledges the support of the EPSRC Centre for Synthetic Biology and Innovation at Imperial College, London.

\appendix
\noindent\textbf{Proof of the claim in Step $2$ of the proposed algorithm:}
\begin{proof}
Using results from Section $4$ of \cite{MIT}, if a
pole of $[s{Q}~~s{P}]$, say $\lambda_i$, is cancelled by
${N}=(s{I}-{R}^*)s^{-1}\triangleq {C}_2({A}_2-sI)^{-1}{B}_2+{I}$, then the realization of the
cascade $(s{I}-{R})s^{-1}[s{Q}~~s{P}]$ loses
observability. 
In this case, it follows that there exists a nonzero vector
${w}_i= [{w}_{1,i}^T, {w}_{2,i}^T]^T$ such that
$$\begin{bmatrix} {A}_1-\lambda_i{I} & {0} \\ {B}_2
  {C}_1 & {A}_2-\lambda_i{I} \\ {C}_1& {C}_2
\end{bmatrix}\begin{bmatrix} {w}_{1,i}\\ {w}_{2,i} \end{bmatrix}={0}.$$
The first equation shows that ${w}_{1,i}$ is an eigenvector of
${A}_1$ corresponding to $\lambda_i$. Since
${A}_1$ is diagonal, ${w}^T_{1,i}=\begin{bmatrix}0&\ldots&0&1_{i^{th}}&0&\ldots&0\end{bmatrix} \in
\mathbb{R}^{1 \times l}$. Therefore, we have
$$\begin{bmatrix} {A}_2-\lambda_i{I} & {B}_2\\ {C}_2& {I}
\end{bmatrix}\begin{bmatrix} {w}_{2,i} \\
  {C}_1{w}_{1,i} \end{bmatrix}={0}.$$
Noticing that ${C}_1{w}_{1,i}= {E}_{i}$ and that
\begin{align*}
&\begin{bmatrix} {I} & {0} \\
-{C}_2({A}_2-s{I})^{-1}& {I}
\end{bmatrix}\begin{bmatrix} {A}_2-s{I} & {B}_2\\ {C}_2& {I}
\end{bmatrix}
=\begin{bmatrix} {A}_2-s{I} & {B}_2\\ {0} &
{N(s)}
\end{bmatrix},
\end{align*}
we obtain, since $\lambda_i\neq0$ is not a pole of $N$,
\begin{equation}\label{eq:req}
{N}(\lambda_i){E}_i={0}.
\end{equation}
In summary, designing ${R}^*$ to cancel any pole
$\lambda_i$ of $[s{Q}~~s{P}]$ is equivalent to imposing that
eq.~\eqref{eq:req} holds. The next question is: given
$[s{Q}~~s{P}]$ what is the maximal number of poles that can
be cancelled by ${N}$, i.e., what is the largest number of poles for which
eq.~\eqref{eq:req} is satisfied?

To answer this, notice that ${E}_i[j]$ being nonzero for some $j$, implies
that there exists at least one nonzero element in the $j^{th}$ row
of ${E}_i$. In this case, satisfying eq.~\eqref{eq:req}
imposes that the $j^{th}$ diagonal element of
${N}(\lambda_i)$ is $0$, i.e., the $j^{th}$
diagonal element of ${R}^*$ is $\lambda_i$. In other words, a
nonzero element in ${E}_i$ corresponds to a fixed value in
the corresponding diagonal position in ${R}^*$. Since ${R}^*$ is a constant diagonal matrix then any pair of
orthogonal vectors in $\{{E}_1,\cdots,{E}_l\}$ does
not intervene in the choice of an element on the diagonal of
${R}^*$.
\end{proof}

\noindent\textbf{Algorithm to find $\phi$ and $\Phi$:}\\
As is presented in \cite{GA}, an undirected graph is
denoted by $\mathcal{G}=(\mathcal{V},\mathcal{E})$
where $\mathcal{V}=\left\{\nu_{1},\ldots,\nu_{l}\right\}$ is the set
of nodes and $\mathcal{E}\subset\mathcal{V}\times\mathcal{V}$ is the
set of edges.

For our purposes, we construct an undirected graph $\mathcal{G}_a$ using the following rules:
\begin{itemize}
\item A node is associated with each vector in the set $\{{E}_1,\cdots,{E}_l\}$. There are thus $l$ nodes in the considered graph.
\item An undirected edge $(i,j)$ is drawn between node $i$ and node $j$ if the equality ${E}_i^T{E}_j=0$ is satisfied.
\end{itemize}

It is easy to see that the maximum cardinality of the set
$\Phi$ corresponds to the maximum number of nodes in a complete
subgraph $K_n$ of the graph $\mathcal{G}_a$. 

Although the problem of finding a largest complete subgraph in an undirected graph is a NP-hard problem,
methods to this end have been well-studied in \cite{link}.\footnote{Some corresponding MATLAB code can be downloaded from http://www.mathworks.com/matlabcentral/fileexchange/19889.} To our best knowledge, for an arbitrary graph, the fastest algorithm has a complexity of $\mathcal{O}(2^{n/4})$\cite{rob}. Therefore, we can use these methods to
obtain a largest complete subgraph and consequently compute the corresponding set $\Phi$ and its corresponding cardinality $\phi$.
\end{document}